\newcommand{\ud}{\mathrm{d}}
\newcommand{\cD}{{\mathcal D}}
\DeclareMathOperator{\supp}{supp}
\numberwithin{equation}{section}
\newtheorem{theorem}{Theorem}[section]
\newtheorem{lemma}[theorem]{Lemma}
\newtheorem{prop}[theorem]{Proposition}
\newtheorem{cor}[theorem]{Corollary}
\newtheorem{remark}[theorem]{Remark}
\theoremstyle{definition}
\numberwithin{equation}{section}
\begin{document}

\thispagestyle{empty}

\vspace*{1cm}

\begin{center}

{\LARGE\bf A remark on the effect of random singular two-particle interactions} \\

\vspace*{2cm}

{\large Joachim Kerner \footnote{E-mail address: {\tt Joachim.Kerner@fernuni-hagen.de}}}%

\vspace*{5mm}

Department of Mathematics and Computer Science\\
FernUniversit\"{a}t in Hagen\\
58084 Hagen\\
Germany\\

\end{center}

\vfill

\begin{abstract} In this note we study a two-particle bound system (molecule) moving on the positive half-line $\mathbb{R}_+$ under the influence of randomly distributed singular two-particle interactions generated by a Poisson process. We give a rigorous definition of the underlying Hamiltonian and study its spectral properties. As a main result we prove that, with finite probability, the random interactions destroy the discrete part of the spectrum which is present in the free system. Most interestingly, this phenomenon is somewhat contrary to the role attributed to random interactions in the context of Anderson localisation where disorder is generally associated with a suppression of transport.
\end{abstract}

\newpage

\section{Introduction}
In this paper we are interested in an interacting quantum two-body problem with the half-line $\mathbb{R}_+$ as one-particle configuration space. The two-particle interactions shall consist of two parts, namely a deterministic binding-potential leading to a molecular-like state and a random part being associated with randomly distributed singular and spatially localised two-particle potentials. 

The singular many-particle interactions we consider were introduced in \cite{BKSingular,BKContact} in an quantum graph setting for providing a toy-model which allows to study many-particle quantum chaos. Subsequently, this type of many-particle interactions was investigated for a two-particle system on the half-line $\mathbb{R}_+$ (i.e. in a non-compact setting) in \cite{KM16,EggerKerner17}. In \cite{KM16} the authors focussed on spectral theory whereas in \cite{EggerKerner17} the aim was to derive an explicit expression for the resolvent and use this to discuss two-particle scattering in the system. Finally, in \cite{KernerMühlenbruchBound} the authors investigated the same two-particle system on the half-line but with an additional two-particle potential leading to a molecular like state, hence providing an extension of the model discussed in \cite{QUnruh}. Quite interestingly and as discussed in [Remark~3.4,\cite{KernerMühlenbruchBound}], the presence of the binding-potential leads to peculiar spectral properties. Most importantly, for the molecule with no additional (attractive) interactions, the authors proved the existence of an eigenvalue below the bottom of the essential spectrum. The existence of such an eigenvalue eigenvalue, on the other hand, is of purely quantum mechanical nature and can be attributed to the geometry of the one-particle configuration space. 

Our goal in this paper is to understand the effect of additional repulsive random two-particle interactions on the molecule. In particular, we are interested in investigating the effect on the discrete part of the spectrum which is non-trivial in the free system as described above. Note that the theory of random Schr\"{o}dinger operators has long become an important area in mathematical physics \cite{SimonBookSchrödinger,KirschInvitation}. Arguably the most important application of this theory is found in the mathematical understanding of the phenomenon of Anderson localisation \cite{AndersonLocalisatioin}. In general, Anderson localisation refers to a suppression of (electrical) transport in solids at certain energies, being due to inhomogeneities in the lattice constituting the solid. To understand this localisation phenomenon mathematically, one starts with a Hamiltonian $H_0=-\Delta+V$ on $L^2(\mathbb{R}^d)$ and some $\mathbb{Z}^d$-periodic potential $V$ respecting the lattice structure. For a large class of potentials, such a Hamiltonian has absolutely continuous spectrum with the spectrum exhibiting a band structure \cite{ShubinBerezin,stollmann2001caught}. Now, by adding a suitable repulsive random potential $V_{\omega}$ to $H_0$ one can show that the operator $H_{\omega}:=H_0+V_{\omega}$ has (almost surely) a non-trivial pure point spectrum, i.e., $\sigma_{pp}(H_\omega) \neq \emptyset$ (see \cite{stollmann2001caught} for more details). In other words, in this picture randomness leads almost surely to a non-trivial pure point spectrum and hence to localisation of the particles which consequently leads to a suppression of transport.

In some sense contrary to this we will show in this paper that there are situations where randomness might lead to an improvement of transport by leading to a destruction of the discrete part of the spectrum with finite probability. As a matter of fact, we will assume the singular two-particle interactions to be distributed on $\mathbb{R}_+$ according to a Poisson process and hence we are dealing with a high degree of randomness, suitable even to describe amorphous materials \cite{StolzPoisson,stollmann2001caught}. 

The paper is organised as follows: In Section~\ref{Sec1} we formulate the model and give a rigorous definition of the underlying self-adjoint Hamiltonian by constructing a suitable quadratic form. In Section~\ref{Sec2} we characterise the essential part of the spectrum and show that it is (almost surely) deterministic. We then focus on the discrete part of the spectrum and show, as a main result, that the discrete part of the spectrum becomes trivial with finite probability given the singular interactions are strong enough.

\section{The model}\label{Sec1}
In this paper we consider a system of two (distinguishable) particles moving on the half-line $\mathbb{R}_+=[0,\infty)$ described by the formal Hamiltonian
\begin{equation}\label{FormalHamiltonianNEW}
H=-\frac{\partial^2}{\partial x^2}-\frac{\partial^2}{\partial y^2}+\sum_{i=1}^{\infty}v_i(x,y)\left[\delta(x-a_i(\omega))+\delta(y-a_i(\omega))\right]\ + V_b(|x-y|)\ ,
\end{equation}
with binding-potential
\begin{equation}\label{BindingPotential}
V_b(|x-y|):=
\begin{cases}
0 \quad \text{if} \quad |x-y| \leq d\ , \\
\infty \quad \text{else}\ ,
\end{cases}
\end{equation}
$0 < d < \infty$ characterising the size of the molecule. The family of symmetric and non-negative two-particle interactions $\{v_i:\mathbb{R}_+\times \mathbb{R}_+ \rightarrow \mathbb{R}_+\}$ is chosen such that $\sigma_i \in L^{\infty}_{loc}(\mathbb{R}_+)$ with $\sigma_i(y):=v_i(a_i(\omega),y)$. Furthermore, the positions (atoms) $(a_i(\omega))_{i\in \mathbb{N}}$ with $a_1(\omega) < a_2(\omega)<...$ are placed on $\mathbb{R}_+$ such that $\{l_i:=a_{i}(\omega)-a_{i-1}(\omega)\}_{i \in \mathbb{N}}$, $a_0(\omega):=0$, forms a family of i.i.d random variables with $\mathbb{P}[l_i \in [a,b]]=\nu \int_{a}^{b}e^{-\nu l}\ \ud l$ on the probability space $(\Pi,\xi,\mathbb{P})$ (Poisson process), see \cite{StolzPoisson,PasturFigotin} for more details. Note that $\nu > 0$ denotes the concentration of the Poisson process.

%
%
%
%
Due to the presence of the binding-potential, the two-particle configuration space has been reduced from $\mathbb{R}^2_+$ to $\Omega$ which is given by
\begin{equation}
\Omega:=\{(x,y) \in \mathbb{R}^2_+\ | \ |x-y| \leq d\}\ .
\end{equation}
For later purposes we also define 
\begin{equation}
\Gamma_i(\omega):=\{(x,y) \in \Omega\ |\  x=a_i(\omega) \quad \lor \quad y=a_i(\omega) \}\cap\Omega\ ,
\end{equation}
and 
\begin{equation}
\partial \Omega_{D}:=\{(x,y) \in \Omega\ |\  |x-y|=d \}\ .
\end{equation}
Now, in order to arrive at a rigorous realization of \eqref{FormalHamiltonianNEW} we construct a suitable quadratic form on the Hilbert space $L^2(\Omega)$. We define
\begin{equation}\label{QuadraticForm}
q_{\omega}[\varphi]:=\int_{\Omega}|\nabla \varphi|^2 \ \mathrm{d}x +\sum_{i}\int_{\Gamma_i(\omega)} \sigma(y)|\varphi_{\bigl|\Gamma_i(\omega)}|^2(y) \ \mathrm{d}y\ ,
\end{equation} 
and note that the restrictions $\varphi_{\bigl|\Gamma_i(\omega)} \in L^2(\Gamma_i(\omega))$ are well defined according to the trace theorem for Sobolev functions \cite{Dob05}. 

Due to the infinite sum in \eqref{QuadraticForm} one has to employ different methods than those used in \cite{KM16,KernerMühlenbruchBound} for establishing a well-defined quadratic form. 
\begin{theorem} Let $(\sigma_i)_{i \in \mathbb{N}} \subset L^{\infty}_{loc}(\mathbb{R}_+)$ be given. Then, for almost every $\omega \in \Pi$, $(q_{\omega},\cD_q(\omega))$ is a non-negative, densely defined and closed quadratic form with form domain
	\begin{equation}
	\cD_q(\omega):=\{ \varphi \in H^1(\Omega): \varphi|_{\partial \Omega_{D}}=0 \ \text{and}\ \sum_{i}\int_{\Gamma_i(\omega)} \sigma_i(y)|\varphi_{\bigl|\Gamma_i(\omega)}|^2(y) \ \mathrm{d}y < \infty \}\ .
	\end{equation}
\end{theorem}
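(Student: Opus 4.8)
The plan is to write $q_\omega=q_0+p_\omega$, where $q_0[\varphi]:=\int_\Omega|\nabla\varphi|^2\,\mathrm{d}x$ is the free Dirichlet form on $\cD_0:=\{\varphi\in H^1(\Omega):\varphi|_{\partial\Omega_D}=0\}$ and $p_\omega[\varphi]:=\sum_i\int_{\Gamma_i(\omega)}\sigma_i(y)\,\bigl|\varphi|_{\Gamma_i(\omega)}\bigr|^2(y)\,\mathrm{d}y$ is the interaction term, and to establish the three assertions one at a time. I would carry out the whole argument on the almost sure event on which the Poisson process yields a strictly increasing, locally finite sequence $a_1(\omega)<a_2(\omega)<\cdots$ with $a_i(\omega)\to\infty$. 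This is the only point at which ``for almost every $\omega$'' is used, and it guarantees that every bounded subset of $\Omega$ meets only finitely many of the interfaces $\Gamma_i(\omega)$.

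Non-negativity is immediate: $q_0\ge0$, and since the potentials $v_i$ and hence the traces $\sigma_i$ are non-negative, each summand of $p_\omega$ is non-negative, so $q_\omega\ge0$. For dense definedness I would check that any $\varphi\in C_c^\infty$ supported in the interior of $\Omega$ and vanishing near $\partial\Omega_D$ belongs to $\cD_q(\omega)$: by local finiteness the support of $\varphi$ meets only finitely many $\Gamma_i(\omega)$, and on each of these the restriction is bounded with compact support while $\sigma_i\in L^\infty_{loc}(\mathbb{R}_+)$, so $p_\omega[\varphi]$ reduces to a finite sum of finite integrals. As such functions are dense in $L^2(\Omega)$, so is $\cD_q(\omega)$.

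The substantial step is closedness, which I would obtain by the standard lower-semicontinuity (Fatou) argument. Let $(\varphi_n)$ be Cauchy in the form norm $\|\cdot\|_{q_\omega}^2:=q_\omega[\cdot]+\|\cdot\|_{L^2(\Omega)}^2$. Since $q_0\le q_\omega$ and $q_0[\varphi]+\|\varphi\|_{L^2(\Omega)}^2$ is the square of the $H^1(\Omega)$-norm, $(\varphi_n)$ is Cauchy in $H^1(\Omega)$ and converges there to some $\varphi$. Continuity of the Sobolev trace (see \cite{Dob05}), applicable because $\Omega$ and each $\Gamma_i(\omega)$ are Lipschitz, gives $\varphi|_{\partial\Omega_D}=\lim_n\varphi_n|_{\partial\Omega_D}=0$ and $(\varphi_n-\varphi_m)|_{\Gamma_i}\to(\varphi-\varphi_m)|_{\Gamma_i}$ in $L^2(\Gamma_i(\omega))$ for each fixed $i$. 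Hence, for every fixed $N$,
\[
\sum_{i=1}^{N}\int_{\Gamma_i(\omega)}\sigma_i\,\bigl|(\varphi-\varphi_m)|_{\Gamma_i}\bigr|^2
=\lim_{n\to\infty}\sum_{i=1}^{N}\int_{\Gamma_i(\omega)}\sigma_i\,\bigl|(\varphi_n-\varphi_m)|_{\Gamma_i}\bigr|^2
\le\liminf_{n\to\infty}p_\omega[\varphi_n-\varphi_m],
\]
and letting $N\to\infty$ yields $p_\omega[\varphi-\varphi_m]\le\liminf_n p_\omega[\varphi_n-\varphi_m]$. The Cauchy property makes the right-hand side arbitrarily small for large $m$, so $p_\omega[\varphi-\varphi_m]\to0$; in particular $p_\omega[\varphi]<\infty$, so $\varphi\in\cD_q(\omega)$, and combined with $q_0[\varphi_n-\varphi]\to0$ this gives $\|\varphi_n-\varphi\|_{q_\omega}\to0$.

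I expect the main obstacle to be exactly the passage from convergence on each single interface $\Gamma_i(\omega)$ to control of the infinite sum defining $p_\omega$. The trace estimate is available only curve by curve, and because the Poisson process produces arbitrarily small gaps $l_i$ with positive probability there is no uniform trace constant over the whole family $\{\Gamma_i(\omega)\}_i$; the Fatou and monotone-approximation device sidesteps this by summing the finite trace norms themselves rather than the trace constants. The only remaining care is to confirm that $\Omega$ (a diagonal strip with a corner at the origin) and the segments constituting each $\Gamma_i(\omega)$ are regular enough for the interior trace maps to be well defined and continuous, which is where the Lipschitz character of the geometry enters.
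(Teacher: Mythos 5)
Your proof is correct and follows essentially the same route as the paper's: Cauchy in the form norm forces Cauchy in $H^1(\Omega)$, traces converge on each fixed $\Gamma_i(\omega)$, and a Fatou argument over the index $i$ controls the infinite sum, giving closedness. Your write-up is in fact somewhat more explicit than the paper's (notably on where the almost-sure local finiteness of the atoms is needed for density, and in spelling out the partial-sum/Fatou step), but the substance is identical.
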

\begin{proof}
	
	
	Since $C^{\infty}_0(\Omega) \subset \cD_q(\omega)$, density follows directly. Also, non-negativity is clear from the definition of $q_\omega$ since every $\sigma_i$ is non-negative.

	 Now, let $(\varphi_n)_{n \in \mathbb{N}} \subset \cD_q(\omega)$ be a Cauchy sequence with respect to the form norm \linebreak $\|\cdot\|_{q_{\omega}}:=\sqrt{q_{\omega}(\cdot)}$ (note here that a Poincar\'{e} inequality holds on $\Omega$ due to the Dirichlet boundary conditions on $\partial \Omega_{D}$). Due to the non-negativity of $\sigma_i$ and the Poincar\'{e} inequality we observe that $(\varphi_n)_{n \in \mathbb{N}}$ is in fact also a Cauchy sequence in $H^1(\Omega)$. Denote this limit by $\varphi \in H^1(\Omega)$. Employing standard (local) trace estimates \cite{Dob05,KM16} one concludes that $\varphi|_{\partial \Omega_{D}}=0$ and 
	\begin{equation}
	\lim_{n \rightarrow \infty}\int_{\Gamma_i(\omega)} \sigma(y)|(\varphi_n)_{\bigl|\Gamma_i(\omega)}|^2(y) \ \mathrm{d}y=\int_{\Gamma_i(\omega)} \sigma(y)|\varphi_{\bigl|\Gamma_i(\omega)}|^2(y) \ \mathrm{d}y
	\end{equation}
	for each fixed $i \in \mathbb{N}$. Hence, by the Lemma of Fatou we conclude that $\varphi \in \cD_q(\omega)$ and that $\lim_{n \rightarrow \infty}q_{\omega}[\varphi_n-\varphi]=0$.
\end{proof}
According to the representation theorem for quadratic forms \cite{BEH08} there exists a unique self-adjoint operator being associated with $q_{\omega}$ for every $\omega \in \Pi$. This operator, being the Hamiltonian of our system, shall be denoted by $-\Delta^d_{\sigma}(\omega)$ and his domain by $\mathcal{D}_{\omega}(-\Delta^d_{\sigma}) \subset \mathcal{D}_q(\omega)$. 
\section{Spectral properties of $-\Delta^d_{\sigma}(\omega)$}\label{Sec2}
In this section we discuss spectral properties of the self-adjoint operator $-\Delta^d_{\sigma}(\omega)$ for values $0 < d < \infty$ and we start by characterising the essential spectrum. Recall that the form domain associated with $-\Delta^d_{\sigma}(\omega)$ is in general not independent of $\omega$ which could mean that essential spectrum depends on $\omega$ as well. However, this is (almost surely) not the case as proven in the following statement. In this context it is worth referring to a classical result which states that certain ergodicity properties of random Schr\"{o}dinger operators always lead to purely deterministic spectral parts \cite{PasturRandom,KirschErgodic,PasturFigotin}.
\begin{theorem}[Essential spectrum]\label{TheoremEssentialSpectrum} Let $(\sigma_i)_{i \in \mathbb{N}} \subset L^{\infty}_{loc}(\mathbb{R}_+)$ be given. Then 
	\begin{equation}
	\sigma_{ess}(-\Delta^d_{\sigma}(\omega))=[\pi^2 / 2d^2,\infty)
	\end{equation}
	holds almost surely.
\end{theorem}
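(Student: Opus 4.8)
The plan is to prove the claimed identity by establishing the two inclusions $\sigma_{ess}(-\Delta^d_\sigma(\omega)) \subseteq [\pi^2/2d^2,\infty)$ and $[\pi^2/2d^2,\infty) \subseteq \sigma_{ess}(-\Delta^d_\sigma(\omega))$ separately, the first holding for every $\omega$ and the second on a set of full measure. It is convenient to first pass to center-of-mass and relative coordinates $\xi := (x+y)/\sqrt{2}$ and $\eta := (x-y)/\sqrt{2}$; this is an orthogonal change of variables, so the Laplacian is unchanged, while the binding constraint becomes $|\eta| \le d/\sqrt{2}$ and the positivity constraints $x,y\ge 0$ become $\xi \ge |\eta|$. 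Thus, away from the corner near the origin, $\Omega$ is a half-strip of transverse width $L := \sqrt{2}\,d$, on whose long sides (these being exactly $\partial\Omega_D$) a Dirichlet condition is imposed. The transverse Dirichlet operator $-\partial_\eta^2$ on $[-L/2,L/2]$ has ground-state energy $(\pi/L)^2 = \pi^2/2d^2$, which explains the threshold and guides both parts of the argument.

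For the upper inclusion I would invoke a Persson-type characterisation of $\inf\sigma_{ess}$, testing only with functions supported outside a large ball $B_R$ around the origin. For such $\varphi \in \mathcal{D}_q(\omega)$ the interaction sum in $q_\omega$ is non-negative and may simply be dropped, leaving $q_\omega[\varphi] \ge \int_\Omega |\nabla\varphi|^2 \ge \int_\Omega |\partial_\eta\varphi|^2$. Since the cross-section at each fixed $\xi > d/\sqrt{2}$ is the full interval $[-L/2,L/2]$ and $\varphi$ vanishes on $\partial\Omega_D = \{\eta = \pm L/2\}$, the one-dimensional Dirichlet inequality gives $\int|\partial_\eta\varphi|^2\,\ud\eta \ge (\pi^2/2d^2)\int|\varphi|^2\,\ud\eta$ for each such $\xi$; integrating in $\xi$ yields $q_\omega[\varphi] \ge (\pi^2/2d^2)\|\varphi\|^2$ whenever $R > d/\sqrt{2}$. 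Letting $R \to \infty$ shows $\inf\sigma_{ess} \ge \pi^2/2d^2$. Note that this holds for \emph{every} configuration $\omega$ and uses only the repulsivity $\sigma_i \ge 0$, so the (known) bound state of the free molecule below the threshold causes no difficulty, as it contributes only to the discrete spectrum.

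For the reverse inclusion I would construct singular Weyl sequences living inside large gaps of the Poisson configuration. The key probabilistic input is that the gaps $l_i$ are i.i.d.\ exponential and hence a.s.\ unbounded: by the second Borel--Cantelli lemma, for every $R$ one has $l_i > R$ for infinitely many $i$, so almost surely $\limsup_i l_i = \infty$ (and the positions of these gaps tend to infinity). In a gap with $a_{i-1} < x,y < a_i$ one checks, in the $(\xi,\eta)$ picture, that the region $\{\,\sqrt 2\,a_{i-1} + d/\sqrt2 < \xi < \sqrt2\,a_i - d/\sqrt2,\ |\eta|\le L/2\,\}$ is a clean full-width sub-strip of longitudinal length $\sqrt2\,(l_i-d)$ which meets none of the interaction sets $\Gamma_j(\omega)$ and does not touch the coordinate axes. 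Fixing $E \ge \pi^2/2d^2$, I set $k := \sqrt{E - \pi^2/2d^2}$ and, in the $n$-th such gap (chosen with $l_{i_n}\to\infty$), define $\psi_n(\xi,\eta) := \chi_n(\xi)\,\ue^{\ui k\xi}\cos(\pi\eta/L)$, where $\chi_n(\xi) = \chi((\xi - c_n)/n)$ is a rescaled bump supported in the clean sub-strip. Each $\psi_n$ is smooth, vanishes on $\partial\Omega_D$, and is supported away from every $\Gamma_j(\omega)$ and from the axes, hence lies in $\mathcal{D}(-\Delta^d_\sigma(\omega))$ with $-\Delta^d_\sigma(\omega)\psi_n = -\Delta\psi_n$. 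Since $(-\Delta - E)\psi_n = -(\chi_n'' + 2\ui k\chi_n')\,\ue^{\ui k\xi}\cos(\pi\eta/L)$, the scaling $\|\chi_n'\|^2 = O(n^{-1})$, $\|\chi_n''\|^2 = O(n^{-3})$, $\|\chi_n\|^2 = O(n)$ gives $\|(-\Delta - E)\psi_n\|/\|\psi_n\| = O(n^{-1}) \to 0$; as the supports escape to infinity, $\psi_n/\|\psi_n\| \rightharpoonup 0$, so $E \in \sigma_{ess}$ by Weyl's criterion.

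Combining the two inclusions on the full-measure event $\{\limsup_i l_i = \infty\}$ yields the claim. The genuinely delicate point is the reverse inclusion: one must verify that a sufficiently long, interaction-free full-width sub-strip is available infinitely often (this is where the geometry of $\Omega$ together with the a.s.\ unboundedness of the Poisson gaps enters) and that the cut-off test functions truly belong to the operator domain with the interaction terms annihilated. By contrast, the upper inclusion is essentially the free half-strip estimate, made effortless by the non-negativity of the $\sigma_i$.
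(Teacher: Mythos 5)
Your proof is correct and follows essentially the same strategy as the paper's: the lower bound $\inf \sigma_{ess}(-\Delta^d_{\sigma}(\omega)) \geq \pi^2/2d^2$ rests on the non-negativity of the $\sigma_i$ (the paper via the comparison operator $-\Delta^d_{\sigma\equiv 0}(\omega)$ and the cited deterministic result, you via a Persson-type estimate that in effect reproves that result through the transverse Dirichlet--Poincar\'e inequality on the strip), while the inclusion $[\pi^2/2d^2,\infty) \subseteq \sigma_{ess}$ is obtained, exactly as in the paper, from Weyl sequences supported in the almost surely existing, arbitrarily large gaps of the Poisson configuration, where they avoid all sets $\Gamma_j(\omega)$. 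The only differences are that you spell out the details the paper delegates to its reference (explicit strip coordinates, the modulated test functions $\chi_n(\xi)\ue^{\ui k \xi}\cos(\pi\eta/L)$ for general $E$), plus one harmless slip: to guarantee full transverse cross-sections outside $B_R$ you need $R>d$ rather than $R>d/\sqrt{2}$, which is immaterial since $R\to\infty$.
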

\begin{proof} To show $[\pi^2 / 2d^2,\infty) \subset \sigma_{ess}(-\Delta^d_{\sigma}(\omega))$ we use an appropriate Weyl sequence $(\varphi_n)_{n \in \mathbb{N}} \subset \cD_q(\omega)$ [Lemma~1.4.4,\cite{stollmann2001caught}]. More explicitly, we use the same Weyl sequence as in the proof of [Theorem~3.1, \cite{KernerMühlenbruchBound}] which consists of ground states of the Laplacian on suitable rectangles subject to Dirichlet boundary conditions. Since, with probability one, there exists for any $k \in \mathbb{N}$ a number $i(k) \in \mathbb{N}$ such that $l_{i}(\omega) > k$, we can place the sequence of rectangles in $\Omega$ such that the boundary contributions in \eqref{QuadraticForm} related to the segments $\{\Gamma_i(\omega)\}$ do not contribute when evaluating $q_{\omega}[\varphi_n]$. Hence, in complete analogy to [Theorem~3.1, \cite{KernerMühlenbruchBound}] we conclude that $[\pi^2 / 2d^2,\infty) \subset \sigma_{ess}(-\Delta^d_{\sigma}(\omega))$.

Finally, in order to prove that $\inf \sigma_{ess}(-\Delta^d_{\sigma}(\omega))=\pi^2 / 2d^2$ we only have to take into account that $-\Delta^d_{\sigma\equiv 0}(\omega)$ forms a comparison operator, i.e. $-\Delta^d_{\sigma\equiv 0}(\omega) \leq -\Delta^d_{\sigma}(\omega)$, and that \linebreak $\inf \sigma_{ess}(-\Delta^d_{\sigma\equiv 0}(\omega))=\pi^2 / 2d^2$ by [Theorem~3.1,\cite{KernerMühlenbruchBound}].

\end{proof}
We now turn attention towards the discrete part of the spectrum and recall the following (deterministic) result which was established in \cite{KernerMühlenbruchBound}. 
\begin{theorem}[Discrete spectrum I] \label{FreeDiscreteSpectrum} Assume that $\sigma_i\equiv 0$ for all $i \in \mathbb{N}$. Then $\sigma_{d}(-\Delta^{d}_{\sigma}(\omega))\neq \emptyset$.
\end{theorem}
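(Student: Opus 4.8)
The plan is to prove the existence of a discrete eigenvalue for the free molecule (with all $\sigma_i \equiv 0$) by exhibiting a trial function $\varphi$ in the form domain whose Rayleigh quotient lies strictly below the bottom of the essential spectrum, which by Theorem~\ref{TheoremEssentialSpectrum} equals $\pi^2/2d^2$. Since the operator $-\Delta^d_{\sigma \equiv 0}(\omega)$ is simply the Dirichlet Laplacian on the infinite strip $\Omega = \{(x,y) \in \mathbb{R}^2_+ : |x-y| \leq d\}$, the claim is that the corner/bend of this L-shaped (or wedge-shaped) region near the origin creates a localised bound state below the continuous threshold. This is the standard ``bound states in bent or corner-shaped waveguides'' phenomenon, and the whole point of Theorem~\ref{FreeDiscreteSpectrum} is that it was already established in \cite{KernerMühlenbruchBound}; my job is therefore to recall why it holds rather than to reprove the cited theorem from scratch.

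First I would fix geometry and notation: the region $\Omega$ is the intersection of the first quadrant with the diagonal strip of width $\sqrt{2}\,d$, so it looks like a semi-infinite strip with a square ``notch'' cut off at the corner where the boundary $\partial\Omega_D$ (the two lines $|x-y|=d$) meets the coordinate axes $x=0$ and $y=0$. The essential spectrum starts at $\pi^2/2d^2$, which is exactly the ground-state energy of the one-dimensional Dirichlet Laplacian on an interval of length $\sqrt 2\, d$ (the transverse width of the strip), corresponding to transverse modes $\sin(\pi t/(\sqrt2\,d))$ that propagate freely along the diagonal direction far from the origin. The natural trial function is the product of this transverse ground mode with a slowly varying longitudinal profile, modified in the corner region so as to exploit the extra Dirichlet boundary imposed by the corner geometry.

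The key computational step is the variational estimate. Taking $\varphi$ to equal the transverse ground state times a cutoff $\chi_n$ that is $1$ on a large region and decays over a scale $n$, I would compute $q_\omega[\varphi] - (\pi^2/2d^2)\|\varphi\|_{L^2}^2$ and show it is negative. The leading longitudinal gradient term contributes $O(1/n)$ from the cutoff, while the geometric defect at the corner, where the cross-section is effectively wider or where the Dirichlet ground mode does not fit the boundary cleanly, contributes a fixed negative amount independent of $n$. Choosing $n$ large enough makes the net quantity strictly negative, so by the min-max principle $\inf\sigma(-\Delta^d_{\sigma\equiv0}(\omega)) < \pi^2/2d^2 = \inf\sigma_{ess}$, forcing a discrete eigenvalue to exist; hence $\sigma_d(-\Delta^d_\sigma(\omega)) \neq \emptyset$.

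I expect the main obstacle to be the careful bookkeeping of the corner contribution: one must construct the trial function so that it genuinely belongs to the form domain (in particular vanishing on $\partial\Omega_D$) while producing a strictly negative, cutoff-independent energy surplus from the corner. The cleanest route, and the one consistent with the citation, is simply to invoke the construction of \cite{KernerMühlenbruchBound} verbatim, since with $\sigma \equiv 0$ the operator is identical to the one treated there and the statement is purely deterministic (independent of $\omega$); the proof then reduces to the single sentence that this case coincides with [Theorem~3.4 / Remark~3.4, \cite{KernerMühlenbruchBound}], where the existence of an eigenvalue below $\pi^2/2d^2$ was proved by precisely such a variational argument.
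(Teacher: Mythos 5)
The paper itself offers no proof of this statement: it is explicitly ``recalled'' from \cite{KernerMühlenbruchBound}, so your final fallback --- invoke the cited deterministic result verbatim, noting that with $\sigma_i\equiv 0$ the operator coincides with the one treated there --- is exactly what the paper does, and the variational skeleton you sketch (trial function with Rayleigh quotient strictly below $\inf\sigma_{ess}=\pi^2/2d^2$, then min-max) is indeed the mechanism behind the cited theorem.

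However, your sketch contains a substantive misidentification of the operator. With $\sigma_i\equiv 0$ the form is $\int_\Omega|\nabla\varphi|^2$ on $\{\varphi\in H^1(\Omega):\varphi|_{\partial\Omega_D}=0\}$, so Dirichlet conditions are imposed \emph{only} on $\partial\Omega_D=\{|x-y|=d\}$; on the portions of $\partial\Omega$ lying on the coordinate axes the boundary conditions are the natural (Neumann) ones. This is not ``the Dirichlet Laplacian on $\Omega$.'' Moreover $\Omega$ is not L-shaped or bent: it is a \emph{straight} diagonal half-strip of width $\sqrt{2}\,d$ terminated by the triangle with vertices $(0,0)$, $(0,d)$, $(d,0)$, on whose two legs Neumann conditions hold. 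The bound state comes from this Neumann-capped triangular enlargement at the end of the strip --- precisely the ``geometry of the half-line'' effect the paper's Remark emphasizes --- not from a bend. This matters for your computation: if the axes carried Dirichlet conditions, your trial function (transverse mode times cutoff) would have to vanish there as well, and the claimed ``fixed negative amount independent of $n$'' from the corner would require a different and harder justification; conversely, with the correct Neumann conditions the trial function need not be modified on the axes at all, and the strictly negative surplus is obtained from the extra volume of the triangle (typically via a small localized perturbation term whose cross term with the transverse mode is nonzero). So the plan is salvageable, but as written the key estimate is set up for the wrong boundary value problem.
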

\begin{remark} Theorem~\ref{FreeDiscreteSpectrum} is interesting from a physics point of view since it implies that a molecule in a corresponding eigenstate remains spatially localised without the presence of attractive interactions. Indeed, this effect is a purely quantum-mechanical one originating from the geometry of the one-particle configuration space (the half-line).
	\end{remark}
The goal is now to understand the influence of the random singular two-particle interactions on the discrete part of the spectrum. From a physical perspective, the strong randomness (Poisson process) is assumed to enforce localisation effects \cite{stollmann2001caught}, in particular, if $\|\sigma_i|_{\Gamma_i(\omega)}\|_{\infty}$ doesn't tend to zero fast enough. In other words, the particles are hindered from escaping to infinity by an infinite number of strongly repelling potential barriers leading to backscattering. The most dramatic situation would occur if $\|\sigma_i|_{\Gamma_i(\omega)}
\|_{\infty} \rightarrow \infty$, effectively leading to Dirichlet boundary conditions along the associated segments (note that Dirichlet boundary along $\Gamma_i(\omega)$ are formally induced by setting $\sigma_i|_{\Gamma_i(\omega)}=\infty$). On the other hand, however, a large number of atoms $a_i(\omega)$ in a finite interval around zero tend to force the particles to escape to infinity since they start repelling each other.

In a first result we prove that, however strong the singular two-particle interactions are, the probability of having a non-trivial discrete part in the spectrum is always finite.
\begin{lemma}\label{PropErhaltDiskret} For all $(\sigma_i)_{i \in \mathbb{N}} \subset L^{\infty}_{loc}(\mathbb{R}_+)$ one has
	\begin{equation}
	\mathbb{P}[\sigma_{d}(-\Delta^{d}_{\sigma}(\omega))\neq \emptyset] > 0\ .
	\end{equation}
\end{lemma}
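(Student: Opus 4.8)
The plan is to exhibit, on an event of positive probability, a single trial function that the variational principle certifies as producing spectrum strictly below the essential threshold $\pi^2/2d^2$. Since by Theorem~\ref{TheoremEssentialSpectrum} one has $\inf\sigma_{ess}(-\Delta^d_{\sigma}(\omega))=\pi^2/2d^2$ almost surely, it suffices to construct a $\varphi\in\cD_q(\omega)$ with
\[
q_{\omega}[\varphi]<\frac{\pi^2}{2d^2}\,\|\varphi\|^2_{L^2(\Omega)}\ .
\]
The min--max principle then forces $\inf\spec(-\Delta^d_{\sigma}(\omega))<\pi^2/2d^2$, and since everything below the essential spectrum is discrete this yields $\sigma_{d}(-\Delta^{d}_{\sigma}(\omega))\neq\emptyset$. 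The point is that such a $\varphi$ can be chosen independently of $\omega$ and with support so small that it does not feel the random segments at all.

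First I would produce a compactly supported trial function for the \emph{free} problem. By Theorem~\ref{FreeDiscreteSpectrum} the operator $-\Delta^d_{\sigma\equiv 0}$ has an eigenvalue strictly below $\pi^2/2d^2$, so its bottom Rayleigh quotient satisfies $\inf_{\psi}\int_{\Omega}|\nabla\psi|^2/\|\psi\|^2<\pi^2/2d^2$, the infimum being taken over $\psi\in H^1(\Omega)$ with $\psi|_{\partial\Omega_{D}}=0$. Picking a near-minimizer $\psi$ whose quotient is still below $\pi^2/2d^2$, I would truncate it: with a smooth cutoff $\chi_R$ equal to $1$ on $\Omega\cap\{x+y\le R\}$ and vanishing on $\{x+y\ge 2R\}$ (so $|\nabla\chi_R|=O(1/R)$), set $\varphi=\chi_R\psi$. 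Because the free form norm is equivalent to the $H^1(\Omega)$-norm (Poincar\'e inequality on $\Omega$, valid thanks to the Dirichlet condition on $\partial\Omega_{D}$), one has $\|\varphi-\psi\|_{H^1(\Omega)}\to 0$ as $R\to\infty$; hence for $R$ large enough $\varphi$ has compact support inside $\{x<2R,\ y<2R\}$, still satisfies $\varphi|_{\partial\Omega_{D}}=0$, and still has $\int_{\Omega}|\nabla\varphi|^2/\|\varphi\|^2<\pi^2/2d^2$. Fix one such $R$ and $\varphi$.

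The probabilistic input is then elementary. On the event $E_R:=\{a_1(\omega)>2R\}=\{l_1>2R\}$ every atom obeys $a_i(\omega)\ge a_1(\omega)>2R$, so each segment $\Gamma_i(\omega)$ is contained in $\{x=a_i\}\cup\{y=a_i\}$ with $a_i>2R$ and is therefore disjoint from $\supp\varphi\subset\{x<2R,\ y<2R\}$. Consequently the entire boundary sum in \eqref{QuadraticForm} vanishes for this $\varphi$, and $q_{\omega}[\varphi]=\int_{\Omega}|\nabla\varphi|^2<\frac{\pi^2}{2d^2}\|\varphi\|^2$ for every $\omega\in E_R$. Intersecting $E_R$ with the almost-sure events on which $q_\omega$ is closed and Theorem~\ref{TheoremEssentialSpectrum} holds, the first paragraph gives $\sigma_{d}(-\Delta^{d}_{\sigma}(\omega))\neq\emptyset$ throughout; and since $\mathbb{P}[E_R]=\int_{2R}^{\infty}\nu\,\ue^{-\nu l}\,\ud l=\ue^{-2\nu R}>0$, I conclude $\mathbb{P}[\sigma_{d}(-\Delta^{d}_{\sigma}(\omega))\neq\emptyset]\ge \ue^{-2\nu R}>0$.

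The only genuinely delicate point is the truncation in the first step: one must ensure that cutting a free near-ground-state down to a bounded region does not raise its energy above the threshold. This is guaranteed precisely because the inequality $\inf\spec(-\Delta^d_{\sigma\equiv 0})<\pi^2/2d^2$ coming from Theorem~\ref{FreeDiscreteSpectrum} is strict, leaving a fixed spectral gap that absorbs the $O(1/R)$ truncation error; the remaining steps are bookkeeping. Note finally that the argument is completely insensitive to the magnitude of the $\sigma_i$, since on $E_R$ the interactions are simply invisible to $\varphi$ — which is exactly why the conclusion holds however strong the singular two-particle interactions are.
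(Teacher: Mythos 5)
Your proposal is correct and follows essentially the same route as the paper: both take the free eigenvalue $E_0<\pi^2/2d^2$ from Theorem~\ref{FreeDiscreteSpectrum}, replace the eigenfunction by a compactly supported trial function whose Rayleigh quotient stays strictly below the threshold, and then use the positive probability that no atom of the Poisson process lies in the (bounded) support, so that the boundary sum in \eqref{QuadraticForm} vanishes and the min--max principle applies. The only cosmetic difference is that you truncate with an explicit cutoff $\chi_R$ whereas the paper invokes density of restrictions of $C^{\infty}_0(\mathbb{R}^2)$-functions.
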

\begin{proof} Let $\varphi_0 \in H^1(\Omega)$ be the normalised eigenfunction of the Hamiltonian $-\Delta^{d}_{\sigma\equiv 0}(\omega)$ to an eigenvalue $E_0 < \frac{\pi^2}{2d^2}$ which exists due to Theorem~\ref{FreeDiscreteSpectrum}. Since the restrictions of all functions in $\varphi \in C^{\infty}_0(\mathbb{R}^2)$ onto $\Omega$ with $\varphi|_{\partial \Omega_{D}}=0$ form a dense subset of $\{\varphi \in H^1(\Omega): \varphi|_{\partial \Omega_{D}}=0\}$ (see for example \cite{Dob05}) we choose a sequence $(\varphi_n)_{n \in \mathbb{N}}$ of such functions such that $\varphi_n \rightarrow \varphi_0$ in $H^1(\Omega)$.
	
	Hence, for every $\varepsilon > 0$ there exists a number $n_0 \in \mathbb{N}$ such that
	\begin{equation}
	\left|\frac{\|\nabla \varphi_n\|^2_{L^2(\Omega)}}{\|\varphi_n\|^2_{L^2(\Omega)}}-E_0\right| < \varepsilon\ 
	\end{equation}
	for all $n > n_0$. On the other hand, since the probability that $[0,R_n]$ (with $R_n$ such that $\supp \varphi_n \subset B_{R_n}(0)$) does not contain any atoms is finite for every $n$, the statement follows by the minmax principle \cite{BEH08}.
\end{proof}
Lemma~\ref{PropErhaltDiskret} shows that even arbitrarily strong singular interactions may not destroy the discrete part of the spectrum. As illustrated by the proof, this happens whenever the atoms $(a_i(\omega))_{i \in \mathbb{N}}$ are situated at large enough distance from the origin. However, as shown by the next result, there are indeed situations in which the discrete part of the spectrum is destroyed.

\begin{theorem}\label{DiscreteSpectrumDestroyed} There exists a constant $\gamma=\gamma(d) > 0$ such that if $\inf \sigma_k> \gamma(d)$ for one $k \in \mathbb{N}$ then
	\begin{equation}
	\mathbb{P}[\sigma_{d}(-\Delta^{d}_{\sigma}(\omega))= \emptyset] > 0\ .
	\end{equation}
\end{theorem}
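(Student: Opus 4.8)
The goal is to produce, on an event of positive probability, the operator lower bound $-\Delta^d_\sigma(\omega)\ge \tfrac{\pi^2}{2d^2}$; since $\inf\sigma_{ess}=\pi^2/2d^2$ by Theorem~\ref{TheoremEssentialSpectrum}, the spectrum then fills exactly $[\pi^2/2d^2,\infty)$ and $\sigma_{d}(-\Delta^d_\sigma(\omega))=\emptyset$ follows. Because every $\sigma_i\ge 0$, the plan is to first discard all interactions except the $k$-th and to estimate $\sigma_k\ge\inf\sigma_k=:\beta$ on $\Gamma_k(\omega)$, which reduces the whole problem to the single-barrier form
\[
\tilde q_\omega[\varphi]:=\int_\Omega|\nabla\varphi|^2\,\ud x+\beta\int_{\Gamma_k(\omega)}|\varphi|^2\,\ud y\ \le\ q_\omega[\varphi].
\]
The event I would condition on is $\{a_k(\omega)<\rho\}$: since $a_k=\sum_{i=1}^k l_i$ is a finite sum of exponential variables, $\mathbb P[a_k<\rho]>0$ for every $\rho>0$, and $\rho$ may be fixed later as small as the argument requires. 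On this event $\Gamma_k(\omega)$ sits close to the origin and, being the cross $\{x=a_k\}\cup\{y=a_k\}$ intersected with $\Omega$, it caps off the corner region of $\Omega$ at the origin (where the Dirichlet strip opens into the Neumann quarter-plane) that is responsible for the bound state of Theorem~\ref{FreeDiscreteSpectrum}.

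Next I would split $\Omega$ along the transverse segment $\{x+y=T\}$ with $T:=2a_k+d$, into the straight half-strip $\Omega_{>T}:=\Omega\cap\{x+y>T\}$ and the bounded corner region $\Omega_{<T}:=\Omega\cap\{x+y<T\}$; by construction $\Gamma_k(\omega)$ is contained in $\Omega_{<T}$. On $\Omega_{>T}$ every transverse slice runs between the two Dirichlet walls $\partial\Omega_{D}$, so the one-dimensional Dirichlet Poincar\'e inequality in the variable $x-y$ gives $\int_{\Omega_{>T}}|\nabla\varphi|^2\ge\frac{\pi^2}{2d^2}\int_{\Omega_{>T}}|\varphi|^2$ — exactly the threshold, with no help from the barrier. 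Allocating the whole boundary term to the corner, it then suffices to prove
\[
\int_{\Omega_{<T}}|\nabla\varphi|^2\,\ud x+\beta\int_{\Gamma_k(\omega)}|\varphi|^2\,\ud y\ \ge\ \frac{\pi^2}{2d^2}\int_{\Omega_{<T}}|\varphi|^2\,\ud x,
\]
for then adding the two estimates yields $\tilde q_\omega[\varphi]\ge\frac{\pi^2}{2d^2}\|\varphi\|_{L^2(\Omega)}^2$.

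The heart of the matter — and the step I expect to be hardest — is the last display, a lower bound on the lowest eigenvalue $\mu_1(\beta;a_k)$ of the Robin-type form on the corner region, whose size stays of order $d$ and does not shrink as $a_k\to0$. Here I would use that $\mu_1(\beta;a_k)$ is nondecreasing in $\beta$ and, by monotone form convergence, tends as $\beta\to\infty$ to the eigenvalue of the problem with a Dirichlet condition along $\Gamma_k(\omega)$. That Dirichlet condition decouples $\Omega_{<T}$ into the thin cells where $x<a_k$ or $y<a_k$, whose ground states blow up as $a_k\to0$ (hence exceed the threshold once $\rho$ is small), together with one genuine triangle $\{x,y>a_k,\ x+y<T\}$ carrying Dirichlet data on its two legs and Neumann data on the cut $\{x+y=T\}$. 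As $a_k\to0$ this triangle converges to $\{x,y>0,\ x+y<d\}$, whose mixed ground state equals $2\pi^2/d^2$ — seen by reflecting across the hypotenuse and recognising the Dirichlet ground state of the square $[0,d]^2$ — comfortably above $\pi^2/2d^2$. Since this limiting value depends only on $d$, it furnishes the constant $\gamma(d)$: for $\beta>\gamma(d)$ one has $\mu_1(\beta;a_k)>\pi^2/2d^2$ for all sufficiently small $a_k$.

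Putting the pieces together, on the positive-probability event $\{a_k(\omega)<\rho\}$ with $\rho=\rho(d,\beta)$ chosen small enough one obtains $-\Delta^d_\sigma(\omega)\ge\frac{\pi^2}{2d^2}$, hence $\sigma_{d}(-\Delta^d_\sigma(\omega))=\emptyset$, proving the claim. The two technical points requiring care are the justification of the monotone $\beta\to\infty$ convergence to the decoupled Dirichlet problem (standard for $\delta$-type form perturbations) and, more delicately, the uniformity in $a_k$ of the resulting threshold; this interchange of the limits $\beta\to\infty$ and $a_k\to0$ is precisely what allows $\gamma$ to depend on $d$ alone rather than on the random position $a_k(\omega)$.
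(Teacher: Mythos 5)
Your argument is correct and rests on the same mechanism as the paper's proof --- reduce to the single barrier $\Gamma_k(\omega)$ using $\sigma_i\geq 0$, place $a_k(\omega)$ at a favourable position with positive probability, decouple a bounded corner region of $\Omega$ along the cross $\{x=a_k\}\cup\{y=a_k\}$, send the Robin parameter to the Dirichlet limit on each resulting cell, and control the unbounded remainder by the transverse Dirichlet Poincar\'e inequality at the threshold $\pi^2/2d^2$ --- but the implementation differs in ways worth recording. The paper conditions on $a_k(\omega)\in\left[\tfrac{1}{\eta}(\tfrac{d}{2}-\delta),\tfrac{d}{2}-\delta\right]$ and always cuts at $x+y=d$, so all four cells deform with $a_k$; it therefore needs the scaling monotonicity of Propositions~\ref{AuxiliaryResultI} and~\ref{AuxiliaryResultII} to reduce the Robin ground states to endpoint values of $a$, and it closes by contradiction through a Rayleigh-quotient inequality, with Robin constant $\gamma/2$ because each segment of $\Gamma_k$ is shared by two cells. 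You instead condition on $a_k(\omega)<\rho$ and cut at $x+y=2a_k+d$, which makes the only cell whose Dirichlet limit does not blow up --- the triangle with legs of length $d$, Dirichlet on the legs, Neumann on the hypotenuse, ground state $2\pi^2/d^2$ by your reflection argument --- a translate of a \emph{fixed} domain, so its Robin eigenvalue is independent of $a_k$ and the limit interchange you flag at the end is not actually an obstacle; the remaining cells are thin in one direction and their Robin ground states diverge as $a_k\to 0$ for any fixed positive coupling, so $\rho$ can be fixed in terms of $d$ alone. This buys you a proof that bypasses the appendix entirely. The one step you should make explicit is the finite-$\beta$ decoupling of the corner region: monotone form convergence on all of $\Omega_{<T}$ only yields the $\beta=\infty$ statement, so to obtain a quantitative threshold you should, as the paper does, assign weight $\beta/2$ of each boundary integral over a segment of $\Gamma_k$ to each of the two cells abutting it and bound each cell's Robin ground state with parameter $\beta/2$ separately; in your geometry this is elementary and produces a constant $\gamma(d)$ depending only on $d$, as the statement requires.
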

\begin{proof} We first describe how to determine $\gamma$: Consider the triangle $D:=\{(x,y) \in \mathbb{R}^2_+ | 0\leq x \leq d \ \text{and}\ y \leq d-x \}$. We dissect it into four parts through the lines $x=a$ and $y=a$ with $a \in \left[\frac{1}{\eta}(\frac{d}{2}-\delta),\frac{d}{2}-\delta\right]$, $\eta > 1$ and small enough $\delta > 0$, and denote the obtained domains by $\Omega_i$, $i=1,...,4$. We then consider, on each domain $\Omega_i$, the Laplacian with Robin boundary conditions of constant $\gamma > 0$ along the boundary segments formed by the lines $x=a$ and $y=a$ and Neumann boundary conditions elsewhere. We denote the corresponding operators by $(-\Delta,\gamma,\Omega_i)$ and by $\mu_i(\gamma,a) > 0$ the corresponding ground state eigenvalues. 
	
Consider $(-\Delta,\gamma,\Omega_1)$ with $\Omega_1$ being the square: By Prop.~\ref{AuxiliaryResultI} it is readily verified that $\mu_1(\gamma,a) \geq \mu_1(\gamma,\frac{d}{2}-\delta)$. On the other hand, well-known results \cite{KatoPerturbation,BruneauPopoff} allow us to establish that $\mu_1(\gamma,\frac{d}{2}-\delta) \rightarrow \mu_1(\infty,\frac{d}{2}-\delta)$ as $\gamma \rightarrow \infty$ with $\mu_1(\infty,\frac{d}{2}-\delta)$ denoting the lowest eigenvalue corresponding to the Laplacian with Dirichlet boundary conditions along the segments where previously Robin boundary conditions were imposed. Since, by a separation of variables, $\mu_1(\infty,\frac{d}{2}-\delta) \geq \frac{2\pi^2}{d^2}$ we obtain $\mu_1(\gamma,a) > \frac{\pi^2}{2d^2}$ for all $a \in \left[\frac{1}{\eta}(\frac{d}{2}-\delta),\frac{d}{2}-\delta\right]$ given $\gamma$ is large enough.
	
 Now, consider $(-\Delta,\gamma,\Omega_2)$ with $\Omega_2$ being the triangle: Again by Prop.~\ref{AuxiliaryResultI} we conclude $\mu_2(\gamma,a) \geq \mu_2(\gamma,\frac{1}{\eta}(\frac{d}{2}-\delta))$. Since $\mu_2(\infty,\frac{1}{\eta}(\frac{d}{2}-\delta))=\frac{2\pi^2}{l_{2}^2}$ with $l_2:=d\left(1-\frac{1}{\eta}\right)+\frac{2\delta}{\eta}$ we conclude that $\mu_2(\gamma,a) > \frac{\pi^2}{2d^2}$ for all $a \in \left[\frac{1}{\eta}(\frac{d}{2}-\delta),\frac{d}{2}-\delta\right]$ given $\gamma$ is large enough, $\eta \rightarrow 1$ and $\delta > 0$ small enough.
	
	Regarding $(-\Delta,\gamma,\Omega_3)$ and $(-\Delta,\gamma,\Omega_4)$ we first note that $\Omega_3$ and $\Omega_4$ are congruent and it is therefore enough to consider $\Omega_3$ only: We reflect $\Omega_3$ along the axis $y=d-x$ to obtain $\Omega^{'}_3$ and we then reflect $\Omega^{'}_3$ across the axis $y=d$ to obtain the half-cross-shaped domain $\tilde{\Omega}_3$. Obviously, one has $\mu_3(\gamma,a) \geq \tilde{\mu_3}(\gamma,a)$ where $\tilde{\mu_3}(\gamma,a)$ is the lowest eigenvalue of the Laplacian on the domain $\tilde{\Omega}_3$ with corresponding Robin boundary conditions. In a next step we realise that $\tilde{\mu_3}(\gamma,a)$ can be bounded from below by the lowest eigenvalue of the Laplacian on the rectangle $[0,a]\times [a,2d-a]$ with Robin boundary conditions along the segments where $y=a$ and $y=2d-a$ and Neumann boundary conditions elsewhere. Denote this eigenvalue by $\hat{\mu}_3(\gamma,a)$. By Prop.~\ref{AuxiliaryResultII} we get $\hat{\mu}_3(\gamma,a)\geq \frac{1}{f(\eta)}\hat{\mu}_3(\gamma,\frac{d}{2}-\delta)$ with $f(\eta) \rightarrow 1$ as $\eta \rightarrow 1$. Since $\hat{\mu}_3(\infty,\frac{d}{2}-\delta)=\frac{\pi^2}{(d+2\delta)^2}$ we obtain $\hat{\mu}_3(\gamma,a)\geq \frac{1}{f(\eta)}\frac{\pi^2}{(d+2\delta)^2}$ and consequently $\mu_3(\gamma,a)> \frac{\pi^2}{2d^2}$ for all $a \in \left[\frac{1}{\eta}(\frac{d}{2}-\delta),\frac{d}{2}-\delta\right]$, $\gamma$ large enough, $\eta \rightarrow 1$ and $\delta >0$ small enough.

	Now we are ready to prove the full statement: To do this, we first observe that $a_k(\omega) \in \left[\frac{1}{\eta}(\frac{d}{2}-\delta),\frac{d}{2}-\delta\right]$ with finite probability. Let $q_{\omega}[\cdot]$ denote the corresponding quadratic form and assume that  $\sigma_{d}(-\Delta^{d}_{\sigma}(\omega)) \neq \emptyset$, i.e., that there exists an eigenvalue $E_0 < \frac{\pi^2}{2d^2}$ with corresponding normalised eigenfunction $\varphi_0 \in \cD_{q}(\omega)$. We then consider the form $\tilde{q}_k[\cdot]:=\int_{\Omega}|\nabla \cdot|^2 \ \mathrm{d}x +\int_{\Gamma_k(\omega)} \sigma_k(y)|(\cdot)_{\bigl|\Gamma_k(\omega)}|^2(y) \ \mathrm{d}y$ which is constructed from $q_{\omega}[\cdot]$ by deleting all contributions being associated with atoms $a_i(\omega)$, $i \neq k$. Obviously, 
	\begin{equation}
\tilde{q}_k[\varphi_0] \leq q_{\omega}[\varphi_0] = E_0 < \frac{\pi^2}{2d^2}
	\end{equation}
	and hence, by the minimax principle, the operator associated with $\tilde{q}_k[\cdot]$ also has an eigenvalue below $\frac{\pi^2}{2d^2}$. Assume that $\tilde{E}_0$ is this eigenvalue with corresponding normalised eigenfunction $\tilde{\varphi}_0 \in H^1(\Omega)$. Now, as in the proof of [Theorem~3.6,\cite{KernerMühlenbruchBound}], we restrict $\tilde{\varphi}_0$ to the triangle $D$ as well as to $\Omega \setminus D$, concluding that $\tilde{\varphi}_0|_{D}$ cannot vanish identically.
	Furthermore, due to inequality [(3.7),\cite{KernerMühlenbruchBound}] we obtain, for some $l\in \{1,...,4\}$,
	\begin{equation}
	\mu_l(\gamma/2,a_k(\omega))\leq \frac{\|(\nabla \tilde{\varphi}_0)|_{\Omega_l}\|^2_{L^2(\Omega_l)}+\frac{\gamma}{2}\int_{\partial \Omega_l\cap \Gamma_k(\omega)}|\tilde{\varphi}_0|_{\partial \Omega_l}|^2 \ \ud y}{\| \tilde{\varphi}_0|_{\Omega_l}\|^2_{L^2(\Omega_l)}} \leq \tilde{E}_0 < \frac{\pi^2}{2d^2}\ .
	\end{equation}
	However, this is in contradiction with the fact that $\mu_l(\gamma/2,a_k(\omega)) > \frac{\pi^2}{2d^2}$ as shown above and hence the statement follows.
\end{proof}
\begin{cor}\label{CorollaryMainTheorem} Assume that the sequence $(\sigma_i)_{i \in \mathbb{N}} \subset L^{\infty}_{loc}(\mathbb{R}_+)$ is such that \linebreak $\limsup_{i\rightarrow \infty} \left(\inf_{y \in \mathbb{R}_+}\sigma_i(y)\right)=\infty$. Then the discrete part of the spectrum becomes trivial with finite probability.
\end{cor}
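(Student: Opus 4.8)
The plan is to deduce the statement directly from Theorem~\ref{DiscreteSpectrumDestroyed}. The crucial feature to exploit is that the threshold $\gamma = \gamma(d)$ furnished by that theorem depends \emph{only} on the size $d$ of the molecule and not on the particular family $(\sigma_i)_{i \in \mathbb{N}}$. Hence it suffices to locate a single index $k$ for which the coupling $\sigma_k$ overshoots this fixed threshold, and no new analytic work is required beyond the preceding theorem.

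To this end I would first unpack the hypothesis $\limsup_{i \to \infty}\bigl(\inf_{y \in \mathbb{R}_+}\sigma_i(y)\bigr)=\infty$. By the very definition of the limit superior, this asserts that for every $M > 0$ the set $\{\, i \in \mathbb{N} : \inf_{y}\sigma_i(y) > M \,\}$ is infinite, and in particular nonempty. Applying this with the choice $M = \gamma(d)$ produces (in fact infinitely many, but one suffices) an index $k \in \mathbb{N}$ with $\inf \sigma_k = \inf_{y}\sigma_k(y) > \gamma(d)$, which is precisely the hypothesis appearing in Theorem~\ref{DiscreteSpectrumDestroyed}.

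With such a $k$ fixed, I would simply invoke that theorem to conclude
\begin{equation}
\mathbb{P}\bigl[\sigma_{d}(-\Delta^{d}_{\sigma}(\omega))= \emptyset\bigr] > 0\ ,
\end{equation}
which is exactly the assertion that the discrete part of the spectrum becomes trivial with finite probability.

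As the above makes clear, there is essentially no obstacle at the level of the corollary itself: the entire analytic content — the construction of $\gamma(d)$ through the dissection of the triangle $D$ and the associated Robin-to-Dirichlet eigenvalue comparisons — has already been carried out in Theorem~\ref{DiscreteSpectrumDestroyed}. The only step meriting attention is the passage from the $\limsup$ hypothesis to the existence of a single admissible index $k$, and this is purely a matter of unwinding the definition of the limit superior together with the observation that $\inf \sigma_k$ is a lower bound for the restriction $\sigma_k|_{\Gamma_k(\omega)}$ entering the form. The corollary should thus be read as the natural \emph{qualitative} consequence of the theorem: a coupling family whose infima grow without bound along a subsequence is guaranteed to contain at least one atom strong enough to trigger the spectral collapse, with the fixed geometric threshold $\gamma(d)$ serving as the relevant benchmark.
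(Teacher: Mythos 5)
Your proposal is correct and is precisely the argument the paper intends: the corollary is stated without explicit proof because it follows immediately from Theorem~\ref{DiscreteSpectrumDestroyed} once the $\limsup$ hypothesis is unwound to produce a single index $k$ with $\inf\sigma_k>\gamma(d)$, exactly as you do.
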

\begin{remark} Regarding Theorem~\ref{DiscreteSpectrumDestroyed} and Corollary~\ref{CorollaryMainTheorem} it is interesting to note that even strong randomness (Poisson process) and arbitrarily high barriers are not sufficient to avoid the destruction of the discrete part of the spectrum. This is insofar interesting as the spectrum becomes pure point, i.e. $\sigma(-\Delta^{d}_{\sigma}(\omega))=\sigma_{pp}(-\Delta^{d}_{\sigma}(\omega))$, when formally setting
	 $\sigma_i=\infty$ for all $i \in \mathbb{N}$.
	\end{remark}
\section*{Acknowledgements}
The authors would like to thank S.~Egger, H.~Laasri and M.~Fleermann for very helpful and interesting discussions. 

\newpage 

\begin{appendix}\section{Appendix}
We establish the following auxiliary result which implies a domain monotonicity property for the Robin Laplacian on rectangles. Note that such a monotonicity is, on general bounded domains $\Omega$, very difficult to establish already for the Neumann Laplacian \cite{Dob05}. 

For the convenience of the reader use a shorthand notation for the volume and surface integrals in the appendix.

\begin{prop}\label{AuxiliaryResultI} Consider the rectangle $\Omega_1:=[a,b] \times [c,d]$ and the scaled rectangle $\Omega_2:=[\alpha a,\alpha b] \times [\beta c,\beta d]$ with $\alpha, \beta \geq 1$. Furthermore, for $i=1,2$ we consider on $H^1(\Omega_i)$ the (Robin-)form
	\begin{equation}
	q_{i}[\varphi]:=\int_{\Omega_i} |\nabla \varphi|^2 + \sum_{j=1}^{4}\sigma_j \int_{\partial\Omega_{ij}}|\varphi|_{\partial\Omega_{ij}}|^2 \ ,
	\end{equation}
	with $\sigma_j \in \mathbb{R}_+$. Furthermore, $\partial\Omega_{ij}$ denotes the $j$-th boundary segment of $\Omega_i$. For $\varphi \in H^1(\Omega_1)$ we define a function $\tilde{\varphi} \in H^1(\Omega_2)$ by $\tilde{\varphi}(x,y):=\varphi(\frac{x}{\alpha},\frac{y}{\beta})$ and obtain
	\begin{equation}
	\frac{q_1[\varphi]}{\|\varphi\|^2_{L^2(\Omega_1)}} \geq  \frac{q_2[\tilde{\varphi}]}{\|\tilde{\varphi}\|^2_{L^2(\Omega_2)}}\ .
	\end{equation}
\end{prop}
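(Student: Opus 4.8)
The plan is to prove the inequality by a single change of variables, rewriting both Rayleigh quotients as integrals over the one reference rectangle $\Omega_1$ and then comparing them coefficient by coefficient. First I would introduce the substitution $u=x/\alpha$, $v=y/\beta$, which maps $\Omega_2$ diffeomorphically onto $\Omega_1$ with constant Jacobian $\alpha\beta$, and under which $\tilde\varphi(x,y)=\varphi(u,v)$ by definition. The $L^2$-norm then scales cleanly, $\|\tilde\varphi\|^2_{L^2(\Omega_2)}=\alpha\beta\,\|\varphi\|^2_{L^2(\Omega_1)}$, so the whole task reduces to understanding how the numerator $q_2[\tilde\varphi]$ transforms.

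Next I would compute the transformation of each term of $q_2$. Since $\partial_x\tilde\varphi=\alpha^{-1}(\partial_u\varphi)$ and $\partial_y\tilde\varphi=\beta^{-1}(\partial_v\varphi)$, the Dirichlet energy splits directionally and, after accounting for the Jacobian, becomes
\[
\int_{\Omega_2}|\nabla\tilde\varphi|^2 = \frac{\beta}{\alpha}\int_{\Omega_1}|\partial_u\varphi|^2 + \frac{\alpha}{\beta}\int_{\Omega_1}|\partial_v\varphi|^2.
\]
For the four boundary integrals the essential bookkeeping is orientation-dependent: on a vertical segment $x$ is fixed and $y$ is the integration variable, so the segment length scales by $\beta$ and $\int_{\partial\Omega_{2j}}|\tilde\varphi|^2=\beta\int_{\partial\Omega_{1j}}|\varphi|^2$; on a horizontal segment the analogous factor is $\alpha$. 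Because the proposition pairs the $j$-th segment of $\Omega_1$ with the $j$-th segment of $\Omega_2$ carrying the \emph{same} weight $\sigma_j$, each boundary contribution of $q_2[\tilde\varphi]$ is exactly $\beta\sigma_j\int_{\partial\Omega_{1j}}|\varphi|^2$ or $\alpha\sigma_j\int_{\partial\Omega_{1j}}|\varphi|^2$.

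I would then assemble the Rayleigh quotient for $\Omega_2$ by dividing the transformed numerator by $\alpha\beta\,\|\varphi\|^2_{L^2(\Omega_1)}$. The two gradient contributions then acquire prefactors $\alpha^{-2}$ and $\beta^{-2}$, the vertical boundary terms acquire $\alpha^{-1}$ and the horizontal ones $\beta^{-1}$. Comparing with $q_1[\varphi]/\|\varphi\|^2_{L^2(\Omega_1)}$, whose corresponding terms all carry prefactor $1$, the claim follows: since $\alpha,\beta\geq 1$ every prefactor appearing in the $\Omega_2$ quotient is at most $1$, so the $\Omega_2$ quotient is term-by-term no larger than the $\Omega_1$ quotient.

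The argument is bookkeeping rather than genuine analysis, so there is no hard step in the usual sense; the only two points that require care are the orientation-dependent scaling of the boundary integrals—keeping straight which of the four segments contributes a factor $\alpha$ and which a factor $\beta$—and the observation that the term-by-term comparison is legitimate precisely because every $\sigma_j$ is non-negative. The non-negativity guarantees that all summands are non-negative, so shrinking each prefactor to a value $\leq 1$ can only decrease the quotient, which is exactly what yields the stated inequality.
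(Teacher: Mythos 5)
Your proposal is correct and follows essentially the same route as the paper: a change of variables expressing the Dirichlet energy, the four boundary integrals, and the $L^2$-norm of one quotient in terms of the other, followed by a term-by-term comparison of the resulting prefactors ($\alpha^{-2}$, $\beta^{-2}$, $\alpha^{-1}$, $\beta^{-1}$, all $\leq 1$ since $\alpha,\beta\geq 1$ and all $\sigma_j\geq 0$). The paper performs the identical bookkeeping, merely written in the opposite direction (expanding $q_1[\varphi]$ over $\Omega_2$ rather than $q_2[\tilde\varphi]$ over $\Omega_1$).
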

\begin{proof} We calculate
	\begin{equation}\begin{split}
	q_1[\varphi]=&\int_{a}^{b}\int_{c}^{d}|\nabla_x \varphi(x,y)|^2 +\int_{a}^{b}\int_{c}^{d}|\nabla_y \varphi(x,y)|^2 \\
	&+\sigma_1 \int_{c}^{d}|\varphi(a,y)|^2 +\sigma_2 \int_{c}^{d}|\varphi(b,y)|^2 \\
	&+\sigma_3 \int_{a}^{b}|\varphi(y,c)|^2 +\sigma_4 \int_{a}^{b}|\varphi(y,d
	)|^2 \\
	=& \ \frac{\alpha}{\beta}\int_{\Omega_2}|\nabla_x \tilde{\varphi}(x, y)|^2+\frac{\beta}{\alpha}\int_{\Omega_2}|\nabla_y \tilde{\varphi}(x, y)|^2 \\
	&+\frac{\sigma_1}{\beta} \int_{\beta c}^{\beta d}|\tilde{\varphi}(\alpha a,y)|^2 +\frac{\sigma_2}{\beta} \int_{\beta c}^{\beta d}|\tilde{\varphi}(\alpha b,y)|^2 
	\\
	&+\frac{\sigma_3}{\alpha} \int_{\alpha a}^{\alpha b}|\tilde{\varphi}(y,\beta c)|^2 +\frac{\sigma_4}{\alpha} \int_{\alpha a}^{\alpha b}|\tilde{\varphi}(y,\beta d)|^2  \ ,
	\end{split}
	\end{equation}
	and 
	\begin{equation}\begin{split}
	\|\varphi\|^2_{L^2(\Omega_1)}&=\int_{a}^{b}\int_{c}^{d}|\varphi(x,y)|^2 \\
	&=\frac{1}{\alpha\beta}\cdot \|\tilde{\varphi}\|^2_{L^2(\Omega_2)}\ .
	\end{split}
	\end{equation}
	Since $\alpha, \beta \geq 1$ a simple estimation yields the result.
\end{proof}
In a similar way one can treat rectangles which are increased in one direction and decreased in the other direction.
\begin{prop}\label{AuxiliaryResultII}Consider the rectangle $\Omega_1:=[a,b] \times [c,d]$ and the scaled rectangle $\Omega_2:=[\alpha a,\alpha b] \times [\beta c,\beta d]$ with $\alpha \geq 1$ and $1 > \beta > \lambda > 0$. Furthermore, for $i=1,2$ and $\varphi \in H^1(\Omega_i)$ we consider the form
	\begin{equation}
	q_{i}[\varphi]:=\int_{\Omega_i} |\nabla \varphi|^2 + \sum_{j=1}^{4}\sigma_j \int_{\partial\Omega_{ij}}|\varphi|_{\partial\Omega_{ij}}|^2 \ ,
	\end{equation}
	with $\sigma_j \in \mathbb{R}_+$. For $\varphi \in H^1(\Omega_1)$ we define a function $\tilde{\varphi} \in H^1(\Omega_2)$ by $\tilde{\varphi}(x,y):=\varphi(\frac{x}{\alpha},\frac{y}{\beta})$ to obtain
	\begin{equation}
	\frac{q_1[\varphi]}{\|\varphi\|^2_{L^2(\Omega_1)}} \geq \lambda^2 \cdot \frac{q_2[\tilde{\varphi}]}{\|\tilde{\varphi}\|^2_{L^2(\Omega_2)}}\ .
	\end{equation}
	\end{prop}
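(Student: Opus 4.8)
The plan is to mirror the computation in the proof of Proposition~\ref{AuxiliaryResultI}, the only essential difference being that the scaling in the $y$-direction now contracts rather than dilates. First I would perform the change of variables $u=x/\alpha$, $v=y/\beta$ in each of the six integrals making up $q_1[\varphi]$, exactly as before. Since $\tilde{\varphi}(x,y)=\varphi(x/\alpha,y/\beta)$ gives $\partial_x\tilde{\varphi}=\tfrac{1}{\alpha}(\partial_u\varphi)$ and $\partial_y\tilde{\varphi}=\tfrac{1}{\beta}(\partial_v\varphi)$, together with the Jacobian factor $\alpha\beta$, the two gradient contributions transform with prefactors $\tfrac{\alpha}{\beta}$ and $\tfrac{\beta}{\alpha}$, while the four boundary contributions (those along $x=a,b$ and those along $y=c,d$) pick up prefactors $\tfrac{1}{\beta}$ and $\tfrac{1}{\alpha}$ respectively. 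In parallel one obtains $\|\varphi\|^2_{L^2(\Omega_1)}=\tfrac{1}{\alpha\beta}\|\tilde{\varphi}\|^2_{L^2(\Omega_2)}$.

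Next I would clear denominators. Because of the identity for the $L^2$-norms, the claimed inequality is equivalent to $\alpha\beta\, q_1[\varphi]\geq \lambda^2\, q_2[\tilde{\varphi}]$. Multiplying the transformed expression for $q_1[\varphi]$ by $\alpha\beta$, the six terms acquire coefficients $\alpha^2$ (for $\int_{\Omega_2}|\nabla_x\tilde{\varphi}|^2$), $\beta^2$ (for $\int_{\Omega_2}|\nabla_y\tilde{\varphi}|^2$), $\alpha$ (for the two $\sigma_1,\sigma_2$ boundary terms along $x=\alpha a,\alpha b$) and $\beta$ (for the two $\sigma_3,\sigma_4$ boundary terms along $y=\beta c,\beta d$). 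It therefore suffices to compare these six coefficients one by one against the single constant $\lambda^2$ multiplying the corresponding terms of $q_2[\tilde{\varphi}]$.

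Finally I would check each coefficient. For the two gradient terms, $\alpha^2\geq 1>\lambda^2$ since $\alpha\geq 1>\lambda$, and $\beta^2>\lambda^2$ since $\beta>\lambda$. For the four boundary terms, using $0<\lambda<1$ one has $\lambda^2<\lambda$, whence $\alpha\geq 1>\lambda>\lambda^2$ and $\beta>\lambda>\lambda^2$. Each coefficient thus dominates $\lambda^2$, and summing the six term-by-term inequalities yields $\alpha\beta\, q_1[\varphi]\geq \lambda^2\, q_2[\tilde{\varphi}]$, which is the assertion.

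The computation is entirely elementary, so there is no serious obstacle; the only point requiring care is the contracting direction. There the gradient coefficient $\beta^2$ is genuinely smaller than $1$, so the crude bound available in Proposition~\ref{AuxiliaryResultI} fails, and this is precisely why the loss factor $\lambda^2$ is unavoidable. The hypothesis $\beta>\lambda$ is exactly what is needed to keep $\beta^2$ (and, a fortiori, the boundary coefficient $\beta$) above $\lambda^2$, so that the auxiliary parameter $\lambda$ quantifies how far $\beta$ is permitted to contract.
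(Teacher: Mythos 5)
Your proof is correct and is precisely the computation the paper has in mind: the paper omits the proof of Proposition~\ref{AuxiliaryResultII}, merely remarking that it goes ``in a similar way'' as Proposition~\ref{AuxiliaryResultI}, and your term-by-term comparison of the coefficients $\alpha^2,\beta^2,\alpha,\beta$ against $\lambda^2$ carries that out correctly, including the correct identification of $\beta^2$ as the binding term that forces the loss factor.
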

\end{appendix}

{\small
	\bibliographystyle{amsalpha}
	\bibliography{Literature}}

\end{document}